\newtheorem{theorem}{Theorem}
\newtheorem{condition}{Condition}
\newtheorem{assumption}{Assumption}
\newcommand{\bmX}{\bm{X}}
\newcommand{\bmx}{\bm{x}}
\newcommand{\bmW}{\bm{W}}
\newcommand{\bmw}{\bm{w}}
\newcommand{\bmU}{\bm{U}}
\newcommand{\bmu}{\bm{u}}
\title{Bayesian Regularization of Gaussian Graphical Models with Measurement Error}
\author{Michael Byrd, Linh Nghiem, and Monnie McGee}
\begin{document}
\maketitle
\begin{abstract}
We consider a framework for determining and estimating the conditional pairwise relationships of variables when the observed samples are contaminated with measurement error in high dimensional settings. Assuming the true underlying variables follow a multivariate Gaussian distribution, if no measurement error is present, this problem is often solved by estimating the precision matrix under sparsity constraints. However, when measurement error is present, not correcting for it leads to inconsistent estimates of the precision matrix and poor identification of relationships. We propose a new Bayesian methodology to correct for the measurement error from the observed samples. This Bayesian procedure utilizes a recent variant of the spike-and-slab Lasso to obtain a point estimate of the precision matrix, and corrects for the contamination via the recently proposed Imputation-Regularization Optimization procedure designed for missing data. Our method is shown to perform better than the naive method that ignores measurement error in both identification and estimation accuracy. To show the utility of the method, we apply the new method to establish a conditional gene network from a microarray dataset.  
\end{abstract}

\section{Introduction}
A core problem in statistical inference is estimating the conditional relationship among random variables.  Naturally, a full description of the underlying connections among the numerous random variables is valuable information across many disciplines, such as in biology where the relationships among hundreds of genes involved in a metabolic process is desired to be uncovered.  In fact, under the assumption that the variables follow a multivariate Gaussian distribution, the inverse covariance matrix, known as the precision matrix, characterizes conditional dependence between two dimensions. This is accomplished by noting that if an element of the precision matrix is 0, then the two variables are conditionally independent; see \cite{lauritzen1996graphical} for a review.  This setting, often referred to as a Gaussian graphical model, is where our analysis takes place.

Estimating the precision matrix is a difficult task when the number of observations $n$ is often much less than the dimension of the features $d$. A naive approach is to estimate the precision matrix by the inverse of the empirical covariance matrix; this estimate, however, is known to perform poorly and is ill-posed when $n < d$ \cite{johnstone2001distribution}.  The common approach is to assume that the precision matrix is sparse \cite{dempster1972covariance}; that is, we assume the precision matrix's off-diagonal elements are mostly 0. As a result, most pairs of variables are conditionally independent. The sparsity assumption has led to different lines of research with regularized models to estimate the precision matrix.  While one approach utilizes a sparse regression technique that estimates the precision by iteratively regressing each variable on the remaining variables, for instance \cite{khare2015convex}, we instead focus on the direct likelihood approach. The direct likelihood approach optimizes the full likelihood function with an element-wise penalty on the precision matrix; common examples being graphical lasso \cite{friedman2008sparse}, CLIME \cite{cai2011constrained}, and TIGER \cite{liu2017tiger}.  We utilize a recent Bayesian optimization procedure, called BAGUS, that relies on optimization performed by the EM-algorithm, which was shown to have desirable theoretical properties, including consistent estimation of the precision matrix and selection consistency of the conditional pair-wise relationships \cite{gan2018bayesian}.

There are many practical issues associated with Gaussian graphical models, such as hyperparameter tuning \cite{yuan2007model}, missing data \cite{liang2018imputation}, and repeated trials \cite{tan2016replicates}, which practitioners need to adjust for a successful analysis.  We address another practical issue involved with these models, measurement error.  Measurement error occurs when the  variables of interest are not observed directly; instead, the observations are the desired variables that have been additionally perturbed with noise from some measurement process. This happens when, for instance, an inaccurate device is used to measure some sort of health metric. Measurement error models have been studied extensively for classical settings such as density deconvolution and regression \cite{carroll2006measurement}, but, to our knowledge, have not yet been well studied in the context of Gaussian graphical models, especially in high dimensional setting.  

We propose a Bayesian methodology to correct for measurement error in estimating a sparse precision matrix; our new method extends the optimization procedure of \cite{gan2018bayesian}.  While directly incorporating the estimate of the uncontaminated variable is possible, we find the incorporation of the imputation-regularization technique of \cite{liang2018imputation} to provide more desirable results.  Our procedure imputes the mismeasured random variables, then performs BAGUS on this imputation; these steps are performed for a small number of cycles, requiring more computation but giving better results than the naive estimator.  We prove consistency of the estimated precision matrix with the imputed procedure, and illustrate the performance in a simulation study.  We conclude with an application to a microarray data.

\section{Contaminated Gaussian Graphical Models}
\label{section:ContaminatedModel}

Given a $d$-dimensional random vector, $\bmx = \{x^1, \ldots, x^d\}$, the conditional dependence of two variables $x^i$ and $x^j$, for any pair $(i,j)$ with $1\leq i \leq j \leq d$, given all the remaining variables is of interest. This conditional dependence structure is usually represented by an undirected graph $G = (V,E)$, where $V = \{1, \ldots, d\}$ is the set of nodes and $E \subseteq V \times V = \{(1,1),(1,2),\ldots, (d,d)\}$ is the set of edges \cite{lauritzen1996graphical}. In this representation, the two variables $x^i$ and $x^j$ are conditionally independent if there is no edge between node $i$ and  node $j$. 

If the vector $\bmx$ follows a multivariate normal distribution with mean $\bm0$ and covariance matrix $\bm{\Sigma}_x$, $\bmx \sim N_d(\bm{0},\bm\Sigma_x)$, every edge corresponds to a non-zero entry in the precision matrix $\bm\Omega_x = \bm\Sigma_x^{-1}$, see \cite{lauritzen1996graphical}. The model in this scenario is often known as a Gaussian graphical model.  In the high dimensional setting, the set of edges are usually assumed to be sparse, meaning that only a few pairs $(x^i, x^j)$ are conditionally dependent. In the Gaussian case, this assumption implies only a few off-diagonal entries of $\bm \Omega_x$ are non-zero.
	
	When measurement error is present, denote $\bmU = (\bmu_1 , \ldots , \bmu_n)^T$ as measurement errors that are independent from data $\bmX = (\bmx_1 , \ldots , \bmx_n)^T$. For $i = 1 , \ldots , n$, the amount of measurement error is drawn from another multivariate normal distribution with mean $\bm0$ and covariance matrix $\bm{\Sigma_u}$, $\bmu_i \sim N_d(\bm0 , \bm{\Sigma_u}).$  We assume $\bm{\Sigma_u}$ to be diagonal, and hence the amount of measurement error on each variable is uncorrelated. We make a common assumption that $\bm{\Sigma_u}$ is known or estimable from ancillary data, such as replicate measurements. The contaminated variables $\bmw = \bmx + \bmu$ in general have a different conditional dependence structure from that of $\bmx$. Indeed, the covariance and precision matrix of $\bmw$ is given by
	\[ \bm{\Sigma}_w = \bm{\Sigma_x} + \bm{\Sigma}_u
	\]
	and 
	\begin{equation} \bm{\Omega}_w = \bm{\Sigma}_w^{-1} = \left(\bm{\Sigma_x} + \bm{\Sigma}_u\right)^{-1} = \bm{\Omega}_x - \bm{\Omega}_x (\bm{I} + \bm\Sigma_u\bm\Omega_x)^{-1} \bm{\Sigma}_u \bm\Omega_x
	\label{eq:Omega_w},
	\end{equation}
	respectively; here, $\bm{I}$ denotes the $d \times d$ identity matrix. Equation \eqref{eq:Omega_w} follows from the Kailath variant formula in \cite{petersen2008matrix}. Furthermore, equation \eqref{eq:Omega_w} suggests that $\bm\Omega_w$ and $\bm\Omega_x$ are equal if the product $\bm{\Omega}_x (\bm{I} + \bm\Sigma_u\bm\Omega_x)^{-1} \bm\Sigma_u\bm\Omega_x$ is equal to a zero matrix. This is generally not the case when the matrix $\bm\Sigma_u$ is not zero. 

	Suppose the data consist of \textit{iid} observations $\bmw_1 , \ldots , \bmw_n$, where $\bmw_i = \bmx_i + \bmu_i,~ i = 1, \ldots, n$  with $\bmx_{i} \sim N_d(\bm0, \bm\Sigma_x)$ and $\bmu_{i} \sim N_d(\bm0, \bm\Sigma_u)$. Here, $\bmw_i = (w_i^1,\ldots, w_i^{d})$, with the subscript and superscript denoting the observation and components respectively. Denote $\bm{W}$ as the $n \times d$ matrix of observed data. The model is equivalent to the following hierarchical representation. First, the latent random variables $\bmx_i$ are generated from a $N_d(\bm{0}, \bm{\Sigma}_x)$ distribution, and when conditioned on $\bmx_i$ and $\bm\Sigma_u$, we have  $\bmw_i\vert \bmx_i, \bm\Sigma_u \sim N_d(\bmx_i,\bm{\Sigma}_u)$ for each $i = 1, \ldots, n$.  This forms an intuitive generative process, where first $\bmx$ is realized, then contaminated by measurement error $\bmu$, and observed finally as $\bmw$. The problem of interest is to estimate the precision matrix $\bm\Omega_x$ in the high dimensional setting $n<d$.

	When no measurement error is present, i.e the $\bmx_i$ are directly observed, the sample covariance matrix $\bm{S} = n^{-1} \sum_{i=1}^n (\bmx_i - \bar{\bmx}) (\bmx_i - \bar{\bmx})^\top$, with $\bar{\bmx}$ being the sample mean, is a consistent estimator for $\bm{\Sigma}_x$. However it has the rank of at most $n<d$, so it is not invertible to estimate $\bm{\Omega}_x$. When measurement error is present, we assume the covariance matrix of measurement error $\bm\Sigma_u$ is known or estimable from replicates. A naive approach is first to estimate $\bm\Sigma_x$ by $ \tilde{\bm\Sigma}_x=\bm{S}_w - \bm\Sigma_u$, where $\bm{S}_w$ denotes the sample covariance from contaminated data $\bm{W}$, and then to invert $\tilde{\bm\Sigma}_x$ to estimate $\bm{\Omega}_x$. The main issue with this approach is that $\tilde{\bm{\Sigma}}_x$ is generally not positively definite.  This implies its inverse is also not positively definite, which is necessary to find a consistent estimate $\bm{\Omega}_x$.	Hence, a correction procedure to estimate $\bm{\Omega_x}$ need not rely upon the sample covariance matrix $\tilde{\bm\Sigma}_x$ directly. Furthermore, the procedure is also able to incorporate sparsity constraints to recover the graphical model structure. These requirements are addressed by the procedure described in the next section. 
\section{The IRO-BAGUS Algorithm} \label{section:ImputationRegularizationME}

In a recent work, \cite{liang2018imputation} develop a methodology to efficiently handle high dimension problems with missing data.  Their solution is an EM-algorithm variant which alternates between two steps, the imputation step and regularized optimization step; we refer to their algorithm as the IRO algorithm.  Denote the missing data as $Y$, and observed data as $X$. Also denote the desired parameter to be estimated by $\theta$, and begin with some initial guess $\theta^{(0)}$.  During the $t^{\text{th}}$ iteration, the IRO algorithm generates Y from the distribution given by the current estimate of $\theta$, i.e. $Y \sim \pi(Y \vert X , \theta^{(t - 1)})$.  Then, using $X$ and $Y$, maximizes $\theta$, under regulation, using the full likelihood.  \cite{liang2018imputation} show that this procedure results in a consistent estimate of $\theta^{(t)}$, and results in a Markov chain with stationary distribution.

We make use of this framework for our current problem pertaining to mismeasured observations instead of missing values.  The problems are naturally related in the sense that both are generating values of the true process from some estimated underlying distribution.  We return to the hierarchical structure of the problem, i.e. $\bmw \sim N_d(\bmx , \bm{\Sigma_u})$ and $\bmx \sim N_{d}(\bm0 , \bm{\Sigma_x})$. The IRO algorithm proceeds iteratively between the two following steps:
\begin{itemize}
	\item \textit{Imputation step}: At iteration $t$, draw $\bm{X}^{(t)} = (\bmx_1^{(t)}, \ldots,\bmx_d^{(t)})$ from the posterior full conditional of $\bmX$, using the current estimate of $\bm{\Omega_x}^{(t-1)}$. Specifically, for $i=1,\ldots,n$, draw $\bmx_i^{(t)} \vert \bmw , \bm{\Omega_x^{(t-1)}} \sim N_d(\bm{\Lambda^{-1}} \bm{\Omega_u} \bmw_i , \bm{\Lambda}^{-1})$ where $\bm{\Lambda} = (\bm{\Omega_x}^{(t-1)} + \bm{\Omega_u})$. Note that the posterior distribution of $\bmx_i$ depends only on $\bmw_i$ due to independence. This allows for easy generation of data from the true underlying distribution.  
	
	\item \textit{Regularization Step}: Apply a regularization to the generated $\bmX^{(t)}$ and obtain a new MAP estimate of $\bm{\Omega_x}^{(t)}$. 
\end{itemize}

In this work, the regularization step is carried out based on a recent Bayesian methodology, called BAGUS. Hence, the whole algorithm is referred to as the IRO-BAGUS algorithm. The next subsections \ref{SSLPriorGGM}-\ref{section:BagusVariableSelection} outline prior specifications, the full model, and variable selection for BAGUS. After that, section \ref{section: consistency} discusses consistency of the IRO-BAGUS estimate.

	
\subsection{The Spike-and-Slab Lasso Prior Specification} \label{SSLPriorGGM}

Denote the elements $\bm{\Omega}_x$ to be $\omega_{ij}$.  Recently, a non-convex, continuous relaxation penalty for the spike-and-slab prior was created for the standard lasso problem \cite{rovckova2018spike}.  This prior was extended to the case of graphical models by \cite{gan2018bayesian}, and is given by 
\begin{equation}
\pi(\omega_{ij}) = \frac{\eta}{2 v_1} \exp\left\{- \frac{\vert \omega_{ij} \vert}{v_1}\right\} + \frac{1 - \eta}{2 v_0} \exp\left\{- \frac{\vert \omega_{ij} \vert }{v_0}\right\}
\end{equation}
for the off diagonal elements ($i \neq j$), where $0 < v_0 < v_1$ and $0<\eta<1$.  This prior can be interpreted as a mixture of the spike-and-slab prior. The first component of the mixture has prior probability $\eta$, and is associated with the slab component, i.e. $\omega_{ij} \neq 0$.  Conversely, with prior probability $1 - \eta$ the element is from the spike component, suggesting $\omega_{ij} = 0$. 

Traditionally, the spike-and-slab prior has a point mass component at 0 and some other continuous distribution for the slab component.  This is to represent setting unwanted terms exactly to 0.  Here, both the spike and the slab components are distributed according to a Laplace distribution; both are centered at 0, but the spike is more tightly centered by a smaller variance term than the slab.  This relaxation of the spike-and-slab prior allows for efficient gradient based algorithms, while still being theoretically sound as shown in \cite{rovckova2018bayesian}.

Shrinkage is not desired on the diagonal elements, so a weakly informative exponential prior is given instead, $ \pi(\omega_{ii}) = \tau \exp\left\{ - \tau \omega_{ii} \right\}.$  Another consideration for the prior of $\bm\Omega_x$ is to ensure the whole matrix to be positive definite, denoting as $\bm\Omega_x \succ 0$.  Moreover, in line with \cite{gan2018bayesian}, we require the spectral norm to be bounded above by some value $B$, $\vert \vert \bm\Omega_x \vert \vert \leq B$.  This assumption will be important going forward.  The full prior distribution for $\bm\Omega_x$ is then given by 
\begin{equation}
\pi(\bm\Omega_x) = \prod_{i < j} \pi(\omega_{ij}) \prod_{i} \pi(\omega_{ii}) I(\bm\Omega_x \succ 0) I(\vert \vert \bm\Omega_x \vert \vert \leq B).
\end{equation}

\subsection{The Full Model}
Without measurement error, the posterior distribution is specified as 
\begin{equation}
\pi(\bm\Omega_x \vert \bmX) \propto \prod_{i = 1}^{n} \pi(\bmx_i \vert \bm\Omega_x) \pi(\bm\Omega_x).
\label{eq:truePosterior}
\end{equation}
The full conditionals can be derived for \eqref{eq:truePosterior}, but, to avoid costly MCMC sampling for this large dimensional problem, \cite{gan2018bayesian} opted to instead find the mode of the the posterior distribution, often referred to as the MAP.  The MAP can be found by minimizing the uncontaminated (UC) objective
\begin{equation} 
L^{\text{UC}}(\bm\Omega_x)  =  \log \pi(\bm\Omega_x \vert \bmX) 
=  \frac{n}{2}\left( \text{tr}(\bm{X}^T \bm\Omega_x \bm{X}) - \text{logdet}(\bm\Omega_x)\right) + \sum_{i < j} \pi(\omega_{ij}) + \sum_{i} \pi(\omega_{ii}) + K
\label{eq:trueLogPosterior}
\end{equation}
with respect to $\bm\Omega_x$, where $K$ is the normalizing constant in \eqref{eq:truePosterior}. To this end, \cite{gan2018bayesian} proved the local convexity of \eqref{eq:truePosterior} when $\vert \vert \bm\Omega_x \vert \vert \leq B < \infty$, which allows an easy optimization procedure that converges asymptotically to the correct precision matrix.

\subsection{Variable Selection} \label{section:BagusVariableSelection}

Many practictioners use Gaussian graphical models for the purpose of identifying non-zero entries of $\bm\Omega_x$, which signify conditional dependencies among the two different variables.  The spike-and-slab lasso formulation allows for this quite easily by viewing the optimization as an instance of the EM-algorithm and defining the hierarchical prior
\begin{equation}
\begin{cases}
\omega_{ij} \vert r_{ij} = 0 \sim \text{Laplace}(0 , v_0) \\
\omega_{ij} \vert r_{ij} = 1 \sim \text{Laplace}(0 , v_1)
\end{cases}.
\end{equation}
Here, $r_{ij}$ is the random indicator that the element of the precision matrix follows from the spike or the slab component, where $r_{ij} \sim \text{Bern}(\eta)$.
A further hierarchical level can be added by treating $\eta$ as random instead of a fixed hyperparameter.  Recent work from \cite{deshpande2017simultaneous} illustrates this and is line with the spike-and-slab Lasso setting of \cite{rovckova2018spike}.  Given our purpose is to study the effect of the measurement error, we choose to treat it as a fixed.

The conditional posterior distribution for $r_{ij}$ is also Bernoulli, with probability of success 
\begin{equation}
p_{ij} = \frac{v_1}{v_0}\frac{1 - \eta}{\eta} \exp\left\{ \vert \omega_{ij} \vert \left(\frac{1}{v_1} - \frac{1}{v_0} \right) \right\}.
\label{eq:E_prob}
\end{equation}
We will use the MAP estimate of $\omega_{ij}$ in \eqref{eq:E_prob} as the approximate probability of inclusion.  With the inclusion probability a hard threshold will be specified to determine the final inclusion for the purpose of simulation and model selection.  Denote $\bm{R}$ and $\bm{P}$ to be the matrix of indicators and conditional posterior probability of inclusion for each element of $\bm\Omega_x$. We note that for final inference it may be better to forego this inclusion threshold, and instead rank-order the $p_{ij}$ for purposes of downstream investigation; however, this will depend on the application at hand.

\subsection{Consistency of the IRO-BAGUS algorithm}
\label{section: consistency}
The entire data generation process for the contaminated sample is summarized below:
\begin{align*}
&\bmw_i \vert \bmx_i, \bm\Omega_x \sim N_d(\bmx_i, \bm\Sigma_u),~i=1,\ldots,n \\
&\bmx_i \vert \bm{\Omega}_x \sim N_d(\bm{0},\bm{\Omega}_x^{-1}),~i=1,\ldots,n \\
&\omega_{ij} \vert r_{ij} = 0,v_0 \sim \text{Laplace}(0,v_0),~i\neq j, ~i,j=1,\ldots,n \\
&\omega_{ij}\vert r_{ij} = 1,v_1 \sim \text{Laplace}(0,v_1),~i\neq j, ~i,j=1,\ldots,n \\
&\omega_{ii} \sim \text{Exp}(\tau),~i=1,\ldots,n \\
&r_{ij}\vert \eta \sim \text{Bern}(\eta),~i\neq j,! i,j=1,\ldots,n. 
\end{align*}
Instead of approximating the posterior distribution of all the parameters, the IRO-BAGUS algorithm iteratively generates realizations of uncontaminated data, $\bmX$, then optimizes $\bm\Omega_x$ with these generated values.   
Under some technical conditions, the IRO algorithm is shown to produce a consistent estimate after each iteration in the context of missing data when the regularization step results in a consistent estimate \cite{liang2018imputation}. We show that these conditions are also held in the case of contaminated data, so the IRO-BAGUS algorithm results in a consistent estimate.  Theorem \ref{thm:iro_consistent} is the analogue statement of consistency as in the missing data case.  The proof is given in the appendix.
\begin{theorem}
	Assuming $\vert \vert \bm\Omega_x \vert \vert \leq B$, then the estimate $\bm\Omega_x^{(t)}$ is uniformly consistent to $\bm\Omega_x$ when $\log(t) = \mathcal{O}(n)$. 
	\label{thm:iro_consistent}
\end{theorem}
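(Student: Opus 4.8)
The plan is to mirror the consistency argument of \cite{liang2018imputation} for the IRO algorithm and verify that the two structural requirements of their theorem hold in the measurement‑error setting: first, that each imputation step draws the latent data from the correct conditional distribution indexed by the current parameter value; and second, that the regularization step (here, BAGUS) produces a consistent estimator of $\bm\Omega_x$ when fed i.i.d.\ draws from $N_d(\bm0,\bm\Omega_x^{-1})$. The first point is immediate from the hierarchical model: given $\bm\Omega_x^{(t-1)}$, the full conditional $\bmx_i\mid\bmw_i,\bm\Omega_x^{(t-1)} \sim N_d(\bm\Lambda^{-1}\bm\Omega_u\bmw_i,\bm\Lambda^{-1})$ with $\bm\Lambda = \bm\Omega_x^{(t-1)} + \bm\Omega_u$ is exactly the posterior used in the imputation step, so the imputed $\bmX^{(t)}$ is a genuine sample from the ``true'' generative process at parameter $\bm\Omega_x^{(t-1)}$. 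The second point is supplied by \cite{gan2018bayesian}: under $\|\bm\Omega_x\|\le B$ the BAGUS objective \eqref{eq:trueLogPosterior} is locally convex and its minimizer converges to the true precision matrix at the usual high‑dimensional rate, which is the per‑iteration consistency of the regularization map that the IRO framework requires.

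Next I would assemble these two facts into the iteration‑level recursion. Writing $\bm\Omega_x^{(t)}$ as the composition of the (stochastic) imputation map and the (deterministic given the imputed data) BAGUS map, one shows that the estimation error $\|\bm\Omega_x^{(t)} - \bm\Omega_x\|$ satisfies a contraction‑plus‑noise inequality of the form $\|\bm\Omega_x^{(t)} - \bm\Omega_x\| \le \rho\,\|\bm\Omega_x^{(t-1)} - \bm\Omega_x\| + \varepsilon_n$, where $\rho<1$ comes from the smoothing effect of the posterior mean $\bm\Lambda^{-1}\bm\Omega_u$ (a consequence of the spectral bound $B$, which controls the conditioning of $\bm\Lambda$ uniformly) and $\varepsilon_n$ is the BAGUS statistical error, which is $o_p(1)$ and in fact $O_p(\sqrt{\log d / n})$ under the stated conditions. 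Iterating this bound $t$ times gives $\|\bm\Omega_x^{(t)} - \bm\Omega_x\| \lesssim \rho^t \|\bm\Omega_x^{(0)} - \bm\Omega_x\| + \varepsilon_n/(1-\rho)$; the first term is negligible once $t$ is moderately large, and the uniform‑in‑$t$ control needed to claim uniform consistency is what forces the scaling $\log(t) = \mathcal{O}(n)$ — one needs a union bound over the $t$ stochastic imputation steps, and each bad‑event probability decays exponentially in $n$, so $t$ may grow as large as $e^{cn}$ before the union bound degrades. This is the step I expect to be the main obstacle: carefully tracking the stochastic fluctuations of the imputed covariance $\bm{S}^{(t)} = n^{-1}\sum_i \bmx_i^{(t)}(\bmx_i^{(t)})^\top$ around its conditional mean, establishing the requisite concentration (sub‑exponential tails for the quadratic forms, using that the conditional covariance $\bm\Lambda^{-1}$ is bounded), and showing the error does not accumulate across iterations despite the chain of dependent draws.

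Finally, I would address the two technicalities that make the argument non‑automatic relative to the missing‑data case. One is that the ``observed‑data'' likelihood here is the $N_d(\bm0,\bm\Omega_x^{-1}+\bm\Sigma_u)$ marginal of $\bmw$, so I must check the identifiability and regularity conditions of \cite{liang2018imputation} (finiteness of relevant moments, smoothness of the complete‑data log‑likelihood in $\bm\Omega_x$, and that the Markov chain $\{\bm\Omega_x^{(t)}\}$ is geometrically ergodic with the right stationary distribution) hold with $\bm\Sigma_u$ known; these all follow from Gaussianity and the spectral bound. The other is to confirm that the set $\{\bm\Omega_x : \bm\Omega_x \succ 0,\ \|\bm\Omega_x\|\le B\}$ is preserved along the iterations so that local convexity from \cite{gan2018bayesian} applies at every step — this holds because the constraint is imposed directly in the BAGUS prior through $I(\bm\Omega_x\succ0)\,I(\|\bm\Omega_x\|\le B)$. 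With these pieces in place, invoking the IRO consistency theorem of \cite{liang2018imputation} with the verified hypotheses yields uniform consistency of $\bm\Omega_x^{(t)}$ under $\log(t)=\mathcal{O}(n)$, completing the proof.
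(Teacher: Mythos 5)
Your overall strategy---reduce the theorem to the IRO consistency result of \cite{liang2018imputation} by checking that the imputation step samples from the correct full conditional and that BAGUS is a consistent regularization map under $\|\bm\Omega_x\|\le B$---is the same frame the paper uses. But the load-bearing middle of your argument is not. You replace the actual verification of the IRO hypotheses with a contraction-plus-noise recursion $\|\bm\Omega_x^{(t)}-\bm\Omega_x\|\le\rho\|\bm\Omega_x^{(t-1)}-\bm\Omega_x\|+\varepsilon_n$ and assert that $\rho<1$ ``comes from the smoothing effect of the posterior mean $\bm\Lambda^{-1}\bm\Omega_u$.'' No such contraction is established anywhere, and it is not how the IRO consistency theorem works: the spectral bound $B$ controls the conditioning of $\bm\Lambda=\bm\Omega_x^{(t-1)}+\bm\Omega_u$, but that says nothing about the parameter-space error shrinking multiplicatively under the composition of the imputation and optimization maps. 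The $\log(t)=\mathcal{O}(n)$ scaling in the theorem does not arise from balancing $\rho^t$ against $\varepsilon_n$; it arises, as you correctly guess in passing, from a union bound over iterations whose per-iteration deviation probabilities decay exponentially in $n$ --- and establishing that exponential decay is precisely the technical content you have skipped.

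Concretely, the paper's proof consists of verifying five conditions from the appendix of \cite{liang2018imputation}, and the nontrivial ones are absent from your proposal: (i) constructing an envelope $m(\bmX,\bmW)$ dominating $|\log\pi(\bmX,\bmW\mid\bm\Omega_x)|$ uniformly over the constrained parameter set, and showing its conditional expectation under the imputation distribution is bounded by $m^*(\bmW)=\tfrac12\sum_i\bmw_i^{T}\bm\Omega_u\bmw_i+K_3$, a sum of scaled chi-squares, so that the uniform-integrability requirements of the Glivenko--Cantelli condition hold; (ii) proving that the centered increments $Z_{t,i}=\log\pi(\bmx_i,\bmw_i\mid\bm\Omega_x)-\int\log\pi(\bmx_i,\bmw_i\mid\bm\Omega_x)\,\pi(\bmx_i\mid\bmw_i,\bm\Omega_x^{(t)})\,d\bmx_i$ are sub-exponential, which requires explicitly computing the Gaussian quadratic-form expectations under $N_d(\bm\Lambda^{-1}\bm\Omega_u\bmw_i,\bm\Lambda^{-1})$; and (iii) bounding the metric entropy of the class $\mathcal{F}_n$ via the uniform Lipschitz property of the Gaussian log-likelihood and an explicit growth assumption on the parameter space (the paper's Assumption 3), without which the Glivenko--Cantelli step fails. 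Your statement that the regularity conditions ``all follow from Gaussianity and the spectral bound'' is a placeholder for exactly these computations. To repair the proposal, drop the contraction recursion entirely and carry out (i)--(iii); the union bound over $t\le e^{cn}$ iterations then follows from the sub-exponential tails in (ii), which is where the stated rate $\log(t)=\mathcal{O}(n)$ actually comes from.
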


It can be seen that the nature of the IRO algorithm is similar to that of MCMC.  Under a few more conditions, \cite{liang2018imputation} note that the IRO results in a Markov chain with a stationary distribution, and hence the average of the maximization steps are consistent estimates of the underlying parameters.  Our final estimates are the averaged regularized optimization steps given by BAGUS from the imputed data at each iteration, removing a small number of the beginning iterations as burn-in.  By averaging instead of taking only the final iteration, we make the analysis less variable. In this sense, the relationships that the correction procedure identifies are more likely to be true relationships, cutting down on the number of false positives.

\section{Computation for the IRO-BAGUS algorithm}
\subsection{Finding MAP estimate for $\bm\Omega_x$} \label{section:BagusIRO}

Here we consider some computational aspects of our proposed methodology.  
First, we focus to the optimization of $\bm\Omega_x$. In our procedure, once $\bmX$ is generated, the objective function to be optimized is $L^{\text{uc}}$, as was shown in Equation \eqref{eq:E_prob}; we note this is due to the conditional independence of $\bmW$ and $\bm\Omega_x$ in the hierarchical structure of the contamination process.  Optimizing $L^{\text{uc}}$ is difficult to do directly; therefore, the latent factors $r_{ij}$ from Section \ref{section:BagusVariableSelection} are introduced into the process as in \cite{gan2018bayesian}.  This allows an E-step similar to the spike-and-slab Lasso and an M-step similar to the Graphical Lasso.

The optimization seeks to find the MAP of the posterior proportional to
\begin{equation}
\vert \bm\Omega_x \vert^{\frac{1}{2}}\exp\left\{-\frac{1}{2}\bmX^T \bm\Omega_x \bmX \right\} 
\prod_{i < j} \pi(\omega_{ij} \vert r_{ij}) \pi(r_{ij} \vert \eta)
\prod_{i} \pi(\omega_{ii} \vert \tau) 
I(\bm\Omega_x \succ 0) I(\vert \vert \bm\Omega_x \vert \vert \leq B).
\end{equation}
The E-step takes the conditional expectation of $r_{ij}$ in the proportional posterior.  Each $r_{ij}$ is conditionally Bernoulli with probability as given in Equation \eqref{eq:E_prob}, which allows for easy calculation of the desired conditional expectations. Then, the desired $Q$ function to maximize in the M-step is given by 
\begin{equation}
Q(\bm\Omega_x \vert \bm\Omega_x^{(t)}) = \mathbb{E}_{\bm{R} \vert \bm\Omega_x^{(t)}}\log\pi(\bm\Omega_x , \bmX \vert \bmW , \bm\Sigma_u),
\end{equation}
where the expectation is taken element wise for $\bm{R}$ by assumed independence of inclusion.  Maximizing $Q$ is done by a block coordinate descent algorithm.  The algorithm cycles between column-wise updates of $\Omega_x$.  We put the details of this procedure in the Appendix.

\subsection{Other Computation Considerations} \label{section:BagusIROCompRemarks}


\subsubsection{Estimating $\bm\Sigma_u$}
We have assumed the covariance matrix of measurement error $\bm\Sigma_u$ to be known before applying the IRO-BAGUS algorithm.  In practice, the matrix $\bm\Sigma_u$ is often estimated from ancillary data, such as replicate observations. Assuming measurement error between variables to be independent is reasonable for many problems and often used in the literature \cite{sorensen2015measurement}. In that case, only the diagonal of $\bm\Sigma_u$ only needs to be estimated. For the data analysis application we provide in Section \ref{section:MicroarrayData}, we estimated them with the method described in \cite{turro2007bgx}, assuming homogeneity of measurement error between observations. After that, we performed the IRO-BAGUS algorithm as previously described.
\subsubsection{Starting Values}
The starting value plays a significant role in the speed of optimization at each step.  To begin, we perform a naive analysis on the raw contaminated data, $\bmW$, giving estimate $\bm\Omega_x^{(0)}$.  This value is then used to start the IRO procedure by generating $\bmX$.  Each optimization has a warm start from the previous iteration's estimated precision matrix.

\subsubsection{Addressing the constraint, $\vert \vert \bm\Omega_x \vert \vert \leq B$}
The constaint that $\vert \vert \bm\Omega_x \vert \vert \leq B$ needs to be addressed.  \cite{gan2018bayesian} suggest using a threshold on the largest absolute value of the elements of the column being updated in the block coordinate descent.  We use the same threshold, and find no performance issues when used with the IRO algorithm.  

\subsubsection{Positive-Definiteness of $\bm\Omega_x$}
Many procedures to estimate a sparse precision can not guarantee postive-definiteness, however \cite{gan2018bayesian} show that the output of BAGUS from the EM algorithm is always symmetric and positive definite.  It is easy to show that the imputation step, with final results averaged, also results in this nice property.
\begin{theorem}
The estimated precision matrix $\hat{\bm\Omega}_x = T^{-1} \sum_{t = 1}^{T} \bm\Omega_x^{(t)}$ is symmetric and positive definite if the initial value of $\bm\Omega_x$ for BAGUS at each iteration $t$ was also positive definite.
\end{theorem}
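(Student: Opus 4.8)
The plan is to argue in two steps: first that each iterate $\bm\Omega_x^{(t)}$ produced by the BAGUS M-step is symmetric and positive definite, and second that symmetry and positive definiteness are preserved under the averaging $\hat{\bm\Omega}_x = T^{-1}\sum_{t=1}^T \bm\Omega_x^{(t)}$. The second step is the easy one and I would dispatch it first in the write-up: positive definite matrices form a convex cone, so any convex combination (in particular the uniform average with weights $1/T$) of symmetric positive definite matrices is again symmetric and positive definite. Concretely, for any nonzero $\bm v$, $\bm v^\top \hat{\bm\Omega}_x \bm v = T^{-1}\sum_t \bm v^\top \bm\Omega_x^{(t)} \bm v > 0$ as a sum of strictly positive terms, and symmetry is immediate since a sum of symmetric matrices is symmetric. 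So the entire content of the theorem reduces to: each $\bm\Omega_x^{(t)}$ is itself symmetric positive definite.

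For that, I would invoke the result already cited from \cite{gan2018bayesian}: the BAGUS EM output is symmetric and positive definite. The only thing to check is that their argument still applies verbatim inside the IRO loop. The key observation is structural: conditional on the imputed data $\bmX^{(t)}$, the regularization step of IRO-BAGUS solves exactly the same optimization problem as BAGUS run on a fully observed dataset — this is precisely the conditional independence of $\bmW$ and $\bm\Omega_x$ given $\bmX$ noted in Section \ref{section:BagusIRO}, which makes the objective $L^{\text{uc}}$ the same functional of $\bm\Omega_x$ whether $\bmX$ is observed or imputed. Therefore, provided the block-coordinate-descent M-step at iteration $t$ is initialized at a symmetric positive definite matrix, \cite{gan2018bayesian}'s analysis guarantees the output $\bm\Omega_x^{(t)}$ is symmetric positive definite. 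I would then note that this hypothesis is satisfied in practice: the warm-start scheme (Section \ref{section:BagusIROCompRemarks}) initializes iteration $t$ at $\bm\Omega_x^{(t-1)}$, which by induction is symmetric positive definite, and the base case $\bm\Omega_x^{(0)}$ is the BAGUS output on the naive data $\bmW$, again symmetric positive definite by the same result. Hence a simple induction on $t$ closes the first step.

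The main obstacle — really the only nontrivial point — is making precise why BAGUS's positive-definiteness guarantee transfers to the imputed setting without re-proving it. The honest version of the argument rests on the two facts above (identical objective given $\bmX$; warm-start preserves the spd initialization hypothesis), and on checking that the column-wise update in the block coordinate descent keeps the Schur-complement quantity that controls the updated diagonal entry strictly positive, which is exactly the mechanism \cite{gan2018bayesian} use and which depends only on the current spd iterate and the (always positive semidefinite) sample cross-product of the imputed data. Since the imputed $\bmX^{(t)}$ is a genuine data matrix, $(\bmX^{(t)})^\top \bmX^{(t)} \succeq 0$, so nothing in their argument breaks. I would spell this out in one paragraph and then combine with the convex-cone observation to conclude.
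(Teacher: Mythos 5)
Your proposal is correct and follows essentially the same route as the paper's own proof: invoke the positive-definiteness guarantee of BAGUS (Theorem 5 of \cite{gan2018bayesian}) for each iterate $\bm\Omega_x^{(t)}$, then observe that the average lies in the convex cone of symmetric positive definite matrices. The additional detail you supply about the warm-start induction and the transfer of the BAGUS argument to imputed data is harmless but not needed, since the theorem's hypothesis already assumes a positive definite initialization at every iteration $t$.
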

\begin{proof}
By Theorem 5 in \cite{gan2018bayesian}, if the initial value to optimize BAGUS is positive definite, then $\bm\Omega_x^{(t)}$ is also positive definite.  The set of positive definite matrices form a cone, and hence the average will also be in this cone.  
\end{proof}

\subsubsection{Parameter Tuning} 

There are four hyperparameters in BAGUS, $\eta , \tau, v_0, \text{and } v_1$.  As with \cite{gan2018bayesian}, we always set $\eta = 0.5$ and $\tau = v_0$, which leaves two hyperparameters to tune.  Again, we follow \cite{gan2018bayesian}, who suggest a BIC-like criteria to select the best model from a grid of hyperparameters.  This criteria is 
\[
\text{BIC} = n(\text{tr}(\bm{S} \hat{\bm\Omega}_x) - \text{logdet}(\hat{\bm\Omega}_x)) + \log(n) \times q,
\]
where $\hat{\bm\Omega}_x$ is the estimated precision matrix and $q$ is the number of non-zero elements of the estimated in the upper diagonal of the precision matrix. We use this in similar fashion for the IRO procedure, but instead we use the averaged $\bm\Omega_x^{(t)}$ in the BIC calculation.

\section{Simulation}
\subsection{Simulation Setup} \label{section:BagusIROSimSetup}
We investigate the performance of our methodology under several different settings.  For each setting we generate  $\bmx_i$ following a $d$-variate Gaussian distribution with mean $\bm{0}$ and precision matrix $\bm{\Omega}_x$ according to some graph structure; we refer to this as the \textit{true data}.  Then, the contaminated observation $\bmw_i$ was generated from $\bmw_i = \bmx_i+\bmu_i$, where $\bmu_i \sim N_d(\bm0, \bm\Sigma_u),~i=1,\ldots,n$. The measurement error covariance matrix $\bm{\Sigma}_u$ is assumed to be a diagonal matrix, with element $\left[\bm{\Sigma}_u\right]_{ii} = \gamma\left[\bm{\Sigma}_x\right]_{ii}$, where $\left[\bm{\Sigma}_x\right]_{ii}$ is the variance of the dimension $x^i$. In other words, the constant $\gamma$ controls the noise-to-signal ratio on each variable. For the purposes of simulation, we assume the amount of measurement error to be known.

To generate the true data we use the \textit{huge} package \cite{zhao2012huge}.  We inspect two different types of graphs, referred to as \textit{hub} and \textit{random}; we expand on these below where $\omega_{ij}$ denotes the $(i,j)$ element of $\bm\Omega_x$.
\begin{enumerate}
\item Hub: For $d / 20$ groups, $\omega_{ij} = \omega_{ji} = 1$ if in the same group. $\omega_{ij} = 0$ otherwise.
\item Random: For $1 \leq i < j \leq d$, $\omega_{ij} = 1$ with probability $\frac{3}{d}$, 0 otherwise.
\end{enumerate}
We illustrate the stuctures in Figure \ref{figure:structure}.

\begin{figure}
\centering
\begin{tabular}{c c}
\includegraphics[page = 2 , scale = .4]{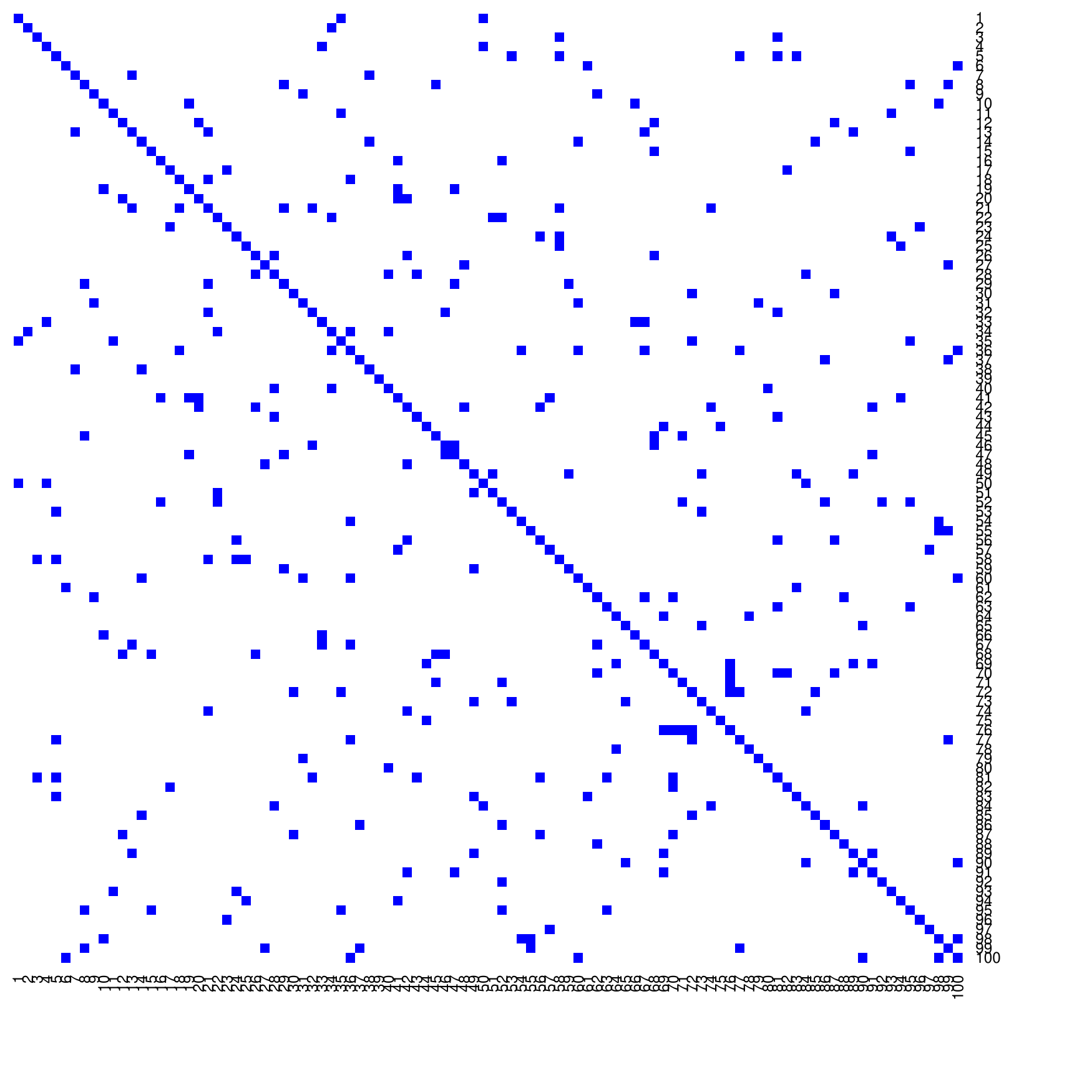}
\includegraphics[page = 1 , scale = .4]{Structure_Graphs.pdf}
\end{tabular}
\caption{Graphical representation for $d = 100$ of the hub (left) and random (right) structure, respectively. Note that the random graph is subject to change due to the randomness.}
\label{figure:structure}
\end{figure}

Each model was generated with $n = 100$ observations.  We inspect each model for $d = \{100 , 200\}$ and $\gamma = \{0.1 , 0.25, 0.5\}$.  The amount of correction-imputations was set to be 50, with the first 20\% discarded as burn-in; we note that we inspected 25 and 100 imputations with the same percentage of burn-in samples with minimal differences in output.  Each setting was replicated 50 times, and the final results are the average of these replicates.  Hyperparameter tuning was done as described in Section \ref{section:BagusIROCompRemarks}.  Because measurement error is often ignored in the context of GGMs, our simulations also provide perspective onto the negative effect that measurement error can impose into the model performance.  

To inspect model performance, we examine both the estimated precision matrix and the ability to do variable selection of BAGUS on the true data (true), BAGUS on the contaminated data (naive), and our IRO-BAGUS methodology on the contaminated data (corrected). For each estimated precision matrix $\hat{\bm{\Omega}}_x$,  estimation error is measured by $\vert\vert \hat{\bm{\Omega}}_x-\bm{\Omega}_x\vert\vert_F$, and variable selection is evaluated by different metrics involving the true positives (TP), false positives (FP), true negatives (TN), and false negatives (FN) are reported: specificity (SPE), sensitivity (SEN), precision (PRE), accuracy (ACC), and Matthews correlation coefficient (MCC); these values are defined as
\[\begin{split}
& \text{SPE} = \frac{\text{TN}}{\text{TN + FP}}, \qquad \text{SEN} = \frac{\text{TP}}{\text{TP + FN}}, \\
& \text{PRE} = \frac{\text{TP}}{\text{TP + FP}}, \qquad \text{ACC} = \frac{\text{TP + TN}}{\text{TP + FP + TN + FN}} \\
& \text{MCC} = \frac{\text{TP} \times \text{TN} - \text{FP} \times \text{FN}}{\text{(TP + FP)(TP + FN)(TN + FP)(TN + FN)}}.
\end{split}\]
Additionally, we also report the area under the ROC curve (AUC), which gives insight into the amount of seperation of the classification.  These different metrics give insight into the tradeoffs and gains of each setting.

\subsection{Simulation Results}

Table \ref{table:hub} and Table \ref{table:random} present the results for the hub and random structure, respectively.  To begin, we note the effect of the increasing measurement error.  This can be observed by examining the growing difference in the performance of the true and naive model when holding $d$ fixed and increasing the amount of contamination.  Focusing on the hub structure, a decrease in the quality of selection and estimation can be observed for each setting, which grows worse with more contamination.  The selection accuracy metrics with respect to the prespecified 0.5 cut-off show drops in performance of around $50\%$.  The estimated precision matrix from the naive grows worse with measurement error, and is also about $50\%$ worse when the signal-to-noise is 0.5.

\begin{table}
\centering
\begin{tabular}{cc  lrrrrrrr}
  \hline
$\gamma$ & d & Model & SEN & SPE & PRE & ACC & MCC & FROB & AUC \\ 
  \hline
\multirow{6}{*}{0.1} & \multirow{3}{*}{100} & True & 1.00 & 0.65 & 0.85 & 0.99 & 0.73 & 5.11 & 0.95 \\ 
   && Naive & 1.00 & 0.50 & 0.76 & 0.99 & 0.61 & 6.81 & 0.93 \\ 
   && Corrected & 1.00 & 0.51 & 0.78 & 0.99 & 0.62 & 6.17 & 0.97 \\ 
\cline{2-10}
& \multirow{3}{*}{200}  & True & 1.00 & 0.67 & 0.77 & 0.99 & 0.71 & 7.36 & 0.94 \\ 
   && Naive & 1.00 & 0.51 & 0.69 & 0.99 & 0.59 & 9.63 & 0.92 \\ 
   && Corrected & 1.00 & 0.51 & 0.71 & 0.99 & 0.60 & 8.68 & 0.96 \\ 
\hline \hline
\multirow{6}{*}{0.25} & \multirow{3}{*}{100}  & True & 1.00 & 0.66 & 0.84 & 0.99 & 0.74 & 5.09 & 0.95 \\ 
  && Naive & 0.99 & 0.38 & 0.60 & 0.98 & 0.47 & 8.54 & 0.90 \\ 
   && Corrected & 1.00 & 0.36 & 0.68 & 0.98 & 0.49 & 7.71 & 0.94 \\ 
\cline{2-10}
 & \multirow{3}{*}{200}  & True & 1.00 & 0.67 & 0.78 & 1.00 & 0.72 & 7.29 & 0.94 \\ 
   && Naive & 1.00 & 0.40 & 0.52 & 0.99 & 0.45 & 12.10 & 0.90 \\ 
   && Corrected & 1.00 & 0.37 & 0.62 & 0.99 & 0.48 & 10.68 & 0.95 \\ 
\hline \hline
 \multirow{6}{*}{0.5} & \multirow{3}{*}{100}  & True & 1.00 & 0.66 & 0.85 & 0.99 & 0.74 & 5.03 & 0.95 \\ 
   && Naive & 1.00 & 0.20 & 0.50 & 0.98 & 0.31 & 9.67 & 0.84 \\ 
   && Corrected & 1.00 & 0.20 & 0.70 & 0.98 & 0.37 & 8.74 & 0.89 \\ 
\cline{2-10}
 & \multirow{3}{*}{200}  & True & 1.00 & 0.68 & 0.77 & 0.99 & 0.72 & 7.36 & 0.94 \\ 
   && Naive & 1.00 & 0.20 & 0.37 & 0.99 & 0.27 & 13.70 & 0.83 \\ 
   && Corrected & 1.00 & 0.17 & 0.59 & 0.99 & 0.31 & 12.53 & 0.89 \\ 
   \hline
\end{tabular}
\caption{Simulation results for the hub graph structure, as specified in Section \ref{section:BagusIROSimSetup}.  For each signal-to-noise ratio and $d$, the true, naive, and corrected models are shown for metrics defined in Section \ref{section:BagusIROSimSetup}.}
\label{table:hub}
\end{table}

We now turn attention to the performance of the correction step.  First, take note of the first five metrics which are  based on the confusion matrix for the 0.5 cutoff threshold.  Averaging across the IRO iterations was expected to result in an analysis that favored identifying relationships that were more certain, which can be observed by inspection of the precision (PRE).  The gains from the precision are most notable as $d$ grows larger, and more pair-wise relationships exist; when $d = 200$, we note nearly 10\% and 50\% performance gains in the precision for signal-to-noise ratios of 0.25 and 0.5, respectively.  In both the hub and random structure the naive and corrected models perform similarly in terms of the sensitivity, specificity, accuracy, and MCC.  

\begin{table}
\centering
\begin{tabular}{cc lrrrrrrr}
  \hline
 Amt. ME & d & Model & SEN & SPE & PRE & ACC & MCC & FROB & AUC \\ 
  \hline
\multirow{6}{*}{0.1} & \multirow{3}{*}{100} & True & 1.00 & 0.42 & 0.84 & 0.98 & 0.59 & 4.61 & 0.89 \\ 
  & & Naive & 1.00 & 0.32 & 0.80 & 0.98 & 0.50 & 5.34 & 0.88 \\ 
  & & Corrected & 1.00 & 0.32 & 0.80 & 0.98 & 0.50 & 5.04 & 0.91 \\ 
\cline{2-10}
& \multirow{3}{*}{200} & True & 1.00 & 0.36 & 0.76 & 0.99 & 0.52 & 6.72 & 0.86 \\ 
  & & Naive & 1.00 & 0.30 & 0.66 & 0.99 & 0.44 & 7.61 & 0.85 \\ 
  & & Corrected & 1.00 & 0.28 & 0.68 & 0.99 & 0.43 & 7.25 & 0.90 \\ 
\hline \hline
\multirow{6}{*}{0.25} & \multirow{3}{*}{100} & True & 1.00 & 0.45 & 0.86 & 0.98 & 0.61 & 4.66 & 0.90 \\ 
  & & Naive & 1.00 & 0.27 & 0.70 & 0.97 & 0.42 & 6.68 & 0.85 \\ 
  & & Corrected & 1.00 & 0.23 & 0.75 & 0.97 & 0.40 & 5.72 & 0.88 \\ 
\cline{2-10}
& \multirow{3}{*}{200} & True & 1.00 & 0.37 & 0.75 & 0.99 & 0.52 & 6.77 & 0.86 \\ 
  & & Naive & 1.00 & 0.23 & 0.52 & 0.99 & 0.34 & 9.04 & 0.82 \\ 
  & & Corrected & 1.00 & 0.18 & 0.60 & 0.99 & 0.32 & 7.95 & 0.86 \\ 
\hline \hline
\multirow{6}{*}{0.5} & \multirow{3}{*}{100} & True & 1.00 & 0.43 & 0.85 & 0.98 & 0.59 & 4.65 & 0.89 \\ 
  & & Naive & 1.00 & 0.14 & 0.55 & 0.97 & 0.26 & 7.71 & 0.79 \\ 
  & & Corrected & 1.00 & 0.09 & 0.67 & 0.97 & 0.24 & 6.49 & 0.79 \\ 
\cline{2-10}
& \multirow{3}{*}{200}& True & 1.00 & 0.37 & 0.76 & 0.99 & 0.53 & 6.74 & 0.86 \\ 
 & & Naive & 1.00 & 0.12 & 0.39 & 0.98 & 0.21 & 10.42 & 0.77 \\ 
  & & Corrected & 1.00 & 0.06 & 0.56 & 0.99 & 0.18 & 8.92 & 0.78 \\ 
   \hline
\end{tabular}
\caption{Simulation results for the random graph structure, as specified in Section \ref{section:BagusIROSimSetup}.  For each signal-to-noise ratio and $d$, the true, naive, and corrected models are shown for metrics defined in Section \ref{section:BagusIROSimSetup}.}
\label{table:random}
\end{table}

It seems at first glance that the selection performance, ignoring the precision, of the correction procedure is comparable to the naive, but these discrepencies can be attributed to the prespecified inclusion cut-off on the $\bm{P}$ matrix.  In practice it can often be more reasonable to rank order the inclusion probabilities to identify relationships to further investigate in future experiments.  With this in mind, we turn to the performance with respect to the AUC where consistent improvements can be seen for the hub and random structure in most all settings.   The AUC helps understand the amount of seperation found in the model across all thresholds, which helps justify that the correction step is making improvements in seperating the classes for the true relationships as AUC improvements are seen in all but the random graph with $d = 200$ and signal-to-noise ratio of $0.5$.  

We note two items in regard to the AUC.  First, the AUC of the corrected model sometimes outperforms the true model, too. In particular, this happens in the hub structure when the amount of measurement error is $0.1$. This can be attributed to the measurement error in models that are easily identified. Second, in the random structure with the amount of measurement error being $0.5$, the corrected model does not make substantial improvements in results over the naive model.  We note the difficulty of this setting, as the random structure often performs worse than other structures in identification, and now we add more noise via the contamination.  With a relatively small sample size, this noise is difficult to overcome.

Finally, we note the quality of the estimated precision matrix, as measured by Frobenious norm of the difference.  In every setting for both the hub and random matrices, the corrected model outperforms the naive model's estimate of the precision matrix.  In the hub structure this improvement is often of the order of $15$-$20\%$ better, while in the random structure a $10\%$ improvement is generally observed.  If the intent of the analysis is to use the estimated precision matrix in downstream analysis, this can result in more refined results.

\section{Data Analysis} \label{section:MicroarrayData}

A common source of noise in analysis involving gene expression datasets is measurement error \cite{rocke2001model}.  Gaussian graphical models are often used to inspect the relationship of different genes in varying experiments \cite{kramer2009regularized}.  We illustrate our methodology using an Affymetrix microarray dataset containing 144 subjects of favorable histology Wilms tumors hybridized to the Affymetrix Human Genome U133A Array \cite{huang2009predicting}. The data is publicly available on the GEO website, dataset GSE10320 uploaded 1/30/2009.  A feature of Affymetrix data, and many other gene expression measurement platforms, is the use of multiple probes for each gene for each patient, giving replicate measurements for each patient's gene measurement.  The replicates for each patient enable an estimate of the measurement error, where we again assume the amount of contamination is independent across genes.  

We follow the preprocessing steps taken in \cite{sorensen2015measurement} and \cite{nghiem2018simulation}, which used this study in the context of measurement error in variable selection for linear models.  The process begins by processing the raw data with the Bayesian Gene Expression (BGX) package \cite{turro2007bgx}.  BGX creates a posterior distribution for the log-scale expression level of each gene in each sample.  The study recorded measurements for 22283 different genes.  

To remove unnecessary computational burden, we reduced the number of genes by applying four different filters in the following order.  The first filter removes expression values that do not have a corresponding Entrez gene ID in the NCBI database \cite{o2015reference}.  The second filter removes expression values with low variability by requiring at least 25\% of samples to have intensities above 100 fluorescence units.  The third filter removes expression values with low variability by requiring the interquartile range to be at least 0.6 on the log scale.  The last filter removes expression values that have have an error to signal to noise ratio greater than 0.5, which we discuss in more depth below.  After filtering, there were 273 expression values remaining for the analysis.  

Now, we discuss how we estimate the measurement error of each gene.  We assume that the measurement error variance is constant across patients for a given gene.  We also assume that the measurement error is independent for each gene, and need not be equal for each gene. Let $\bm{\hat{\mu}} = (\hat{\mu}_{1j} , \ldots , \hat{\mu}_{nj})^T$ denote the estimated vector of the patient's gene expression levels for gene $j$.  Further, let $\bar{\mu} = n^{-1} \sum_{j = 1}^n \hat{\mu}_{ij}$ and $\hat{\sigma}_{j}^2 = n^{-1} \sum_{j = 1} (\hat{\mu}_{ij} - \bar{\mu}_j)^2$ denote the mean and variance of each gene, respectively.  For patient $i$, standardized measurements are given by $\bm{W}_i = (W_{i1} , \ldots , W_{ip})$, calculated as $W_{ij} = \hat{\sigma}_j^{-1}(\hat{\mu}_{ij} - \bar{\mu}_{j})$ for each $j = 1 , \ldots , 273$.

Let $\text{var}(\hat{\mu}_{ij})$ denote the posterior variance of the estimated distribution of patient $i$'s gene $j$.  These estimates are then combined as $\hat{\sigma}_{u,j}^2 = n^{-1} \sum_{i = 1}^n \text{var}(\hat{\mu}_{ij})$.  The measurement error covariance matrix of the standardized data $\bm{W}$ is then estimated by diagonal matrix $\hat{\bm{\Sigma}}_u$, where $(\hat{\bm{\Sigma}}_u)_{j,j} = \hat{\sigma}_{u,j}^2 / \hat{\sigma}_j^2$ for $j = 1 , \ldots , p$ and off-diagonal elements are 0.  The fourth filter can be now formalized, where genes are removed if $\hat{\sigma}_{u , j}^2 \geq 0.5 \hat{\sigma}_j^2$; i.e. only genes with a noise-to-signal ratio less than 1 are kept for the analysis.

The original BAGUS algorithm and the corrected BAGUS algorithm were run for the remaining genes found after filtering.  As with the simulations, the corrected BAGUS found fewer conditional pair-wise relationships; for this data set, the original BAGUS and corrected BAGUS found 1045 and 552 conditional pair-wise relationships, respectively.  Of the 1045 naive pair-wise relationships, 42\% were also found in the corrected pair-wise relationships; similarly, of the 552 corrected conditional pair-wise relationships, 80\% were found in the naive model.  The large percentage overlap of relationships in the corrected model with relationships in the naive model suggests that most relationships in the corrected model are true relationships. Conversely, the small percentage overlap of relationships in the naive model with those in the correct model suggests that the naive model is finding many false positive relationships. We illustrate the conditional pair-wise dependencies of the genes in Figure \ref{figure:NetworkGraph}.  The naive analysis is shown on the left and the corrected on the right, where the green edges signify relationships found by both procedures and purple edges signify procedure specific relationships.

\begin{figure}
\begin{tabular}{c c}
\includegraphics[page = 1 , scale = .4 , trim = {5cm 4cm 3cm 4cm}]{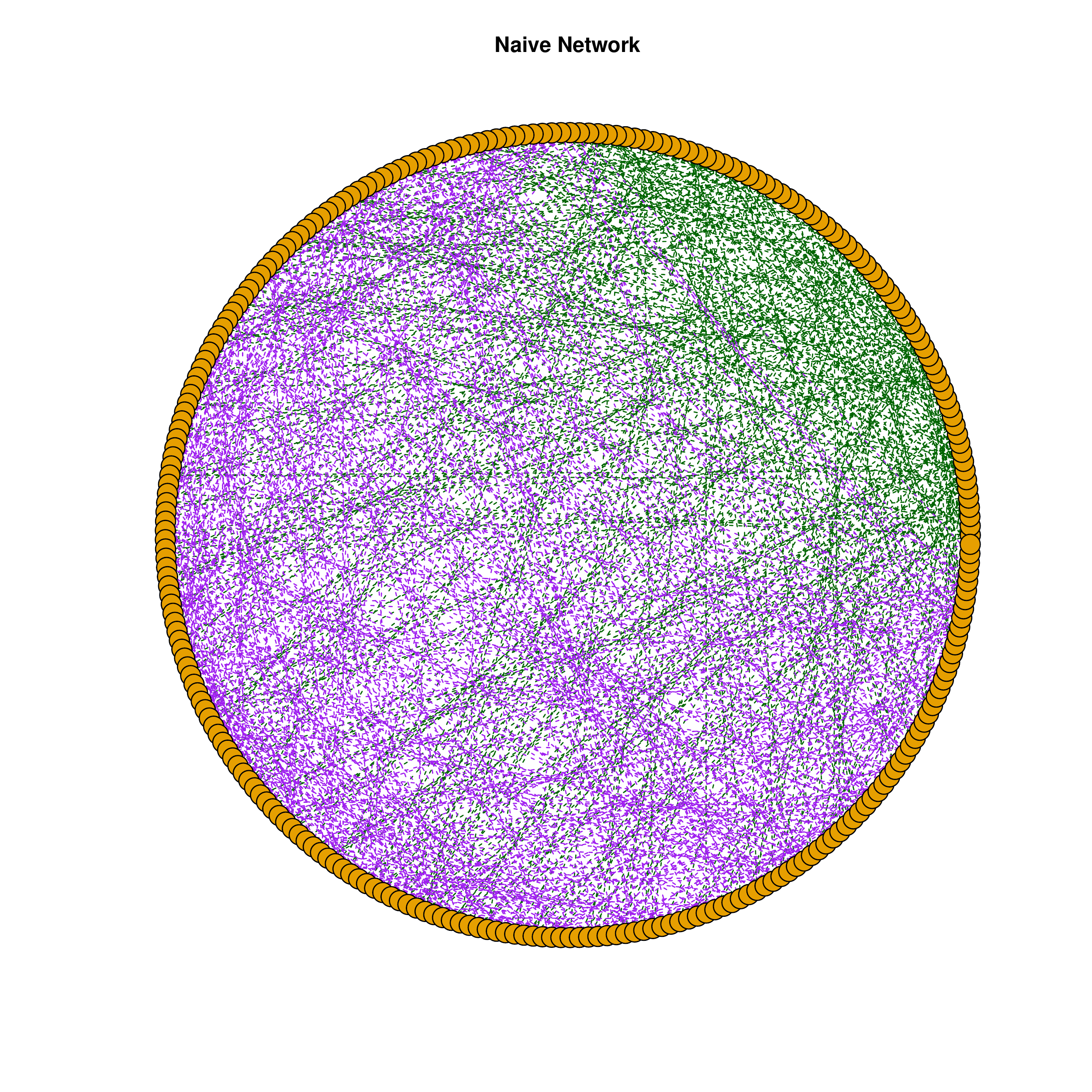} & 
\includegraphics[page = 2 , scale = .4 , trim = {3cm 4cm 4cm 4cm}]{Graph.pdf}
\end{tabular}
\caption{The conditional pair-wise relationships for each of the 273 genes remaining after filtering from the Wilms tumor study.  Each edge represents a conditional pair-wise dependency between two nodes. The left shows the naive analysis, not correcting for measurement error, and the right shows the corrected analysis, correcting for measurement error.  Green edges signify edges found on both graphs, and purple signifies analysis specific edges.}
\label{figure:NetworkGraph}
\end{figure}

\section{Conclusion}

We proposed a correction methodology for Gaussian graphical models when contaminated with additive measurement error.  The core solution to the problem involves using the imputation-regularization algorithm to generate the true values of underlying process with a consistent estimate of the precision matrix.  This provides a consistent, positive-definite estimate of the true precision matrix, which, as simulations illustrate, remove many false positive pair-wise relationships.  Additionally, we show marked improvements in the AUC of the threshold matrix, indicating better separation of the underlying relationships.  From a practitioner point of view, this allows for more reliable downstream analysis and further investigations to be undergone.  

To our knowledge, the novel imputation-regularization algorithm has yet to be used for problems pertaining to contaminated data.  This provides an avenue of future research for more a practical issue in high-dimensional problems, measurement error, which is starting to gain attention.  Moreover, many practical issues still remain in the Gaussian graphical model context, such as the tuning of hyperparameters and the interpretation of the output from the Gibbs sampler-like IRO algorithm. Another potiental avenue of research to pursue is when the amount of measurement error is unknown and not assumed independent.  In this case, sparsity would need to be imposed on $\bm\Omega_u$ in conjunction with $\Omega_x$, posing a challenging, but useful, computational procedure.

\section{Appendix}

\subsection{Proofs}

The proof for Theorem \ref{thm:iro_consistent} in Section \ref{section:ImputationRegularizationME} is established here.  Work done for the IRO algorithm laid the foundation for certain conditions to be met to establish consistency, see the appendix of \cite{liang2018imputation}. We follow closely with their development, and prove the necessary conditions to establish consistency in our context of contaminated GGMs. These conditions include two main parts: (1) the consistency of the regularization step, specifically the BAGUS procedure in our context, and (2) some technical conditions regarding the log-likelihood $\pi(\bmX,\bmW)$. To that end, Assumptions 1 and 2 below ensures the consistency of the BAGUS procedure, while Assumption 3 ensures the metric entropy of the log likelihood not to grow too fast. Discussion of Assumptions 1 and 2 can be found in \cite{gan2018bayesian}, while Assumption 3 has been commonly used in the literature of high-dimensional statistics, see the Remark 1 in the appendix of \cite{liang2018imputation}.

\begin{assumption}
$\lambda_{\text{max}}(\bm\Omega_x) \leq 1 / k_1 \leq \infty$, where $\lambda_{\text{max}}(\bm\Omega_x)$ is the largest eigenvalue of $\bm\Omega_x$ and $k_1$ is a constant such that $k_1 > 0$.
\end{assumption}

For the following assumption we need to define the following values.  Let the column sparsity for $\bm\Omega_x$ be denoted $b = \max_{i = 1,\ldots,d}\sum_{j = 1}^{d} \bm{1}(\omega_{ij} \neq 0)$. For a $m \times q$ matrix $A$ let $\vert \vert \vert A \vert \vert \vert_{\infty} = \max_{1 \leq j \leq q} \sum_{i = 1}^d \vert a_{ij} \vert$ be the maximum absolute row sum. Define $M_{\Sigma_x} = \vert \vert \vert \bm\Sigma \vert \vert \vert_{\infty}$ and $M_{\Gamma} = \vert \vert \vert \bm\Gamma_{s , s}^{-1} \vert \vert \vert_{\infty}$ where $\bm\Gamma = \bm\Sigma_x \otimes \bm\Sigma_x$ and $\bm\Gamma_{s , s}$ denotes the subset of $\bm\Gamma$ by indices $s = \{(i,j) : \bm\Omega_x \neq 0\}$. Let $a_1 > 0$ and $a_2 > 0$ be any predefined constant value.  Also, let $a_3$ and $k_2$ be defined such that $\frac{\log(d)}{n} < a_3 < \frac{1}{4}$ and $\mathbb{E}(e^{tx^{(j)^2}}) \leq k_2$ for all $\vert t \vert \leq a_3$ and $j = 1 , \ldots , d$.  We define $a_4 = a_1(2  + a_2 + a_3^{-1}k_2^2)$, $a_5 = (a_4 + 2M_{\Sigma_x}^2(a_1 + a_4) M_{\Gamma} + 6(a_1 + a_4) b M_{\Gamma}^2 M_{\Sigma}^3 / M$. Finally, define constants $\epsilon_0 > 0$ and $\epsilon_1 > 0$, where $\epsilon_1$ is small.  

\begin{assumption}
For the previously defined constants, the following three statements hold:
\begin{enumerate}
\item The hyperparameters $v_0, v_1, \eta,$ and $\tau$ satisfy
\[\begin{split}
\text{(a) } \quad & \frac{1}{n v_1}  = a_1 \sqrt{\frac{\log(d)}{n}} (1 - \epsilon_0), \\
\text{(b) } \quad & \frac{1}{n v_0} > a_5 \sqrt{\frac{\log(d)}{n}}, \\
\text{(c) } \quad & \frac{v_1^2 (1 - \eta)}{v_0^2 \eta} \leq d^{\epsilon_1}, \\
\text{(d) } \quad & \tau \leq a_1 \frac{n}{2}  \sqrt{\frac{\log(d)}{n}}.
\end{split}\]

\item For the bound $\vert \vert \bm\Omega_x \vert \vert < B$, we have that $B$ satisfies 
\[
\frac{1}{k_1} + 2b(a_1 + a_4)M_{\Gamma}\sqrt{\frac{\log(d)}{n}} < B < \sqrt{2nv_0}.
\]

\item For $M = \max\{2 b (a_1 + a_4) M_{\Gamma} 
\max\{3 M_{\Sigma} , 3 M_{\Gamma} M_{\Sigma}^3 , \frac{2}{k_1^2}  \} , 
\frac{2a_1 \epsilon_0}{k_1^2} \}$, we have $\sqrt{n} \geq M \sqrt{\log(p)}$.
\end{enumerate}

\label{ass:bagus_hyper}
\end{assumption}

\begin{assumption}
The parameter space of $\bm\Omega_x$, or an $L_1$-ball containing the space of $\bm\Omega_x$, grows at a rate of $\mathcal{O}(n^{\alpha})$ for some $0 \leq \alpha \leq \frac{1}{2}$.
\label{ass:metric_entropy}
\end{assumption}

Under these assumptions, we show that the developed procedure to correct for measurement errors satisfy the general conditions for the consistency of the IRO estimate. We state each condition and prove it to hold with our procedure.


\begin{condition}
$\log\pi(\bmX , \bmW \vert \bm\Omega_x)$ is a continuous function of $\bm\Omega_x$ for each $\bmx , \bmw \in \mathbb{R}^d$ and a measurable function of $(\bmX , \bmW)$ for each $\bm\Omega_x$.
\end{condition}
\begin{proof}
We have the expansion
\[
\log\pi(\bmX , \bmW \vert \bm\Omega_x) = \log\pi(\bmX \vert \bm\Omega_x) + \log\pi(\bmW \vert \bmX , \bm\Omega_u).
\]
Hence, the log posterior is continuous for symmetric positive-definite $\bm\Omega_x$ since $\bmx \sim N(\bm0_d , \bm\Omega_x^{-1})$.  The log posterior is also measurable for $(\bmX , \bmW)$ due to properties of the Gaussian distribution.
\end{proof}

\begin{condition}
Three conditions for the Glivenko-Cantelli theorem to hold.
\end{condition}
\begin{enumerate}
\item There exists a function $m_n(\bmX , \bmW)$ such that $\sup_{\bm\Omega_x , \bmX} \vert \log\pi(\bmX , \bmW \vert \bm\Omega_x) \vert \leq m_n(\bmX , \bmW)$.
\item There exists $m_n^*(\bmW)$, such that: 
\begin{enumerate}
\item $0 \leq \int m_n(\bmX , \bmW) \pi(\bmX \vert \bmW , \bm\Omega_x^{(t)}) d\bmX \leq m_n^*(\bmW)$ for all $\bm\Omega_x^{(t)}$,

\item $\mathbb{E}[m_n^*(\bmW)] < \infty$ ,
\item $\sup_{n \in \mathbb{Z}^{+}} \mathbb{E}[m_n^*(\bmW) \mathbb{I}(m_n^*(\bmW) \geq \xi)] \rightarrow 0$ as $\xi \rightarrow \infty$.
\end{enumerate}
Also, as $\xi \rightarrow \infty$, 
\[
\sup_{n \geq 1} \sup_{\bmX , \bm\Omega_x } \vert \int m_n(\bmX , \bmW) \mathbb{I}(m_n(\bmX , \bmW) > \xi) \pi(\bmX \vert \bmW , \bm\Omega_x) \vert \rightarrow 0.
\]

\item  Define $\mathcal{F}_n = \{ \int \log\pi(\bmX , \bmW \vert \bm\Omega_x) \pi(\bmX \vert \bmW , \bm\Omega_x^{(t)}) d\bmX \}$ and $\mathcal{G}_{n,M} = \{q \bm{1}\{m_n^*(\bmW) \leq M\} \vert q \in \mathcal{F}_n\}$. Suppose that, for every $\epsilon$ and $M > 0$, the metric entropy $\log(N(\epsilon, \mathcal{G}_{n , M} , L_1(\mathbb{P}_n))) = \mathcal{O}(n),$ where $\mathbb{P}_n$ is the emprical measure of $\bmW$ and $N(\epsilon, \mathcal{G}_{n , M} , L_1(\mathbb{P}_n))$ is the covering number with respect to the $L_1(\mathbb{P})$-norm.
\end{enumerate}

\begin{proof}
We begin with part (1). Note that 
\[ \begin{split}
\log\pi(\bmX , \bmW \vert \bm\Omega_x) 
& = \sum_{i = 1}^{n} \left[\log\pi(\bmw_i \vert \bmx_i , \bm\Omega_u) + \log\pi(\bmx_i \vert \bm\Omega_x) \right] \\
& =  -\frac{1}{2} \sum_{i = 1}^{n} \left[ (\bmw_i - \bmx_i)^{T} \bm\Omega_u (\bmw_i - \bmx_i) 
+ \bmx_i^T \bm\Omega_x \bmx_i\right] + \frac{1}{2}\log \text{det}(\bm\Omega_x) + C,
\end{split} \]
where $C$ contains constants not related to $(\bmX , \bmW , \bm\Omega_x)$. Hence,
\[ \begin{split}
\vert \log\pi(\bmX , \bmW \vert \bm\Omega_x) \vert
& \leq \frac{1}{2} \sum_{i = 1}^{n} \left[ (\bmw_i - \bmx_i)^{T} \bm\Omega_u (\bmw_i - \bmx_i) 
+  K_1\bmx_i^T\bmx_i\right] + K_2 \\
& = \sum_{i =1}^{n} m(\bmx_i , \bmw_i) = m(\bmX , \bmW),
\end{split} \]
where $K_1$ and $K_2$ are constants depending on upper bound $B$.

To prove part (2) note 
\[ \begin{split}
\tilde{m}(\bmW , \bm\Omega_x^{(t)})
& = \int m(\bmX , \bmW) \pi(\bmX \vert \bmW , \bm\Omega_x^{(t)}) d\bmX \\
& = \int \sum_{i = 1}^{n} m(\bmx_i , \bmw_i) \left[\prod_{j = 1}^{n} \pi(\bmx_i \vert \bmw_i , \bm\Omega_x^{(t)}) \right] d\bmx_1 , \ldots , d\bmx_n \\
& = \sum_{i = 1}^{n} \int m(\bmx_i , \bmw_i) \pi(\bmx_i \vert \bmw_i , \bm\Omega_x^{(t)}) d\bmx_i,
\end{split}\]
where the last equality follows from conditional independence of each $\bmx_i$.  Let $\bm\Lambda^{(t)} = (\bm\Omega_x^{(t)} + \bm\Omega_u)^{-1}$, and notice this the sum of expectations of $m(\bmx_i , \bmw_i)$ with respect to Gaussian random variables following $N(\bm\Lambda^{(t)} \bm\Omega_u \bmw_i , \bm\Lambda^{(t)})$ for each $i = 1 , \ldots , n$. Now,
\[
\mathbb{E}_{\bmx_i \vert \bmw_i , \bm\Omega_x^{(t)}} \left[m(\bmx_i , \bmw_i) \right]
= \frac{1}{2}\bmw_i^{T} \bm\Omega_u \bmw_i + \frac{1}{2}\text{tr}((\bm\Omega_u + K_1 \bm{I}_d)\bm\Lambda^{(t)}) - \underbrace{\bmw_i^{T}\bm\Omega_u\bm\Lambda^{(t)}\bm\Omega_u\bmw_i}_{\geq 0}, 
\]
which, since $\vert \vert \bm\Lambda^{(t)} \vert \vert \leq K_3$, implies
\[
\tilde{m}(\bmW , \bm\Omega_x^{(t)} 
\leq \frac{1}{2} \sum_{i = 1}^{n} \bmw_i^{T} \bm\Omega_u \bmw_i + K_3
= m^*(\bmW).
\]
Marginally $\bmw_i \sim N(\bm0_d , \bm\Sigma_x , \bm\Sigma_u)$, and hence $m^*(\bmW)$ is the sum of scaled chi-square distributions.  Conditions (b) and (c) easily follow from the properties of the chi-square distribution.

To prove part (3), we make use of Remark 1 found in the Appendix of \cite{liang2018imputation}.  Since all elements in $\cup_{n \geq 1} \mathcal{F}_n$ are uniformly Lipschitz, see \cite{honorio2012lipschitz}, the metric entropy can be measured on the basis of the parameter space of $\bm\Omega_x$.  The functions in $\mathcal{G}_{n,M}$ are bounded and the parameter space can be contained by the $L_1$ ball due to the continuity of $\log\pi(\bmX , \bmW \vert \bm\Omega_x)$.  By Assumption \ref{ass:metric_entropy}, then $\log(N(\epsilon, \mathcal{G}_{n , M} , L_1(\mathbb{P}_n))) = \mathcal{O}(n^{2\alpha}\log(d)).$
\end{proof}

\begin{condition}
Define $Z_{t,i} = \log \pi(\bmx_i , \bmw_i \vert \bm\Omega_x) - \int \log \pi(\bmx_i , \bmw_i \vert \bm\Omega_x) \pi(\bmX \vert \bmw_i , \bm\Omega_x^{(t)})$.  $Z_{t,i}$ are subexponential random variables.
\end{condition}

\begin{proof}
First, we note that
\[ \begin{split}
\log\pi(\bmx_i , \bmw_i \vert \bm\Omega_x) & = -\frac{1}{2}(\bmw_i - \bmx_i)^{T} \bm\Omega_u (\bmw_i - \bmx_i) - \frac{1}{2}\bmx_i \bm\Omega_x \bmx_i \\
& = -\frac{1}{2} \bmx_i^{T} (\bm\Omega_x + \bm\Omega_u) \bmx_i + \bmx_i^{T} \bm\Omega_u \bmw_i + C_1,
\end{split} \]
where $C_1$ is a constant free of $\bmX$.  Also note $\log\pi(\bmw_i , \bmX \vert \bm\Omega_x) = \log\pi(\bmw_i , \bmx_i \vert \bm\Omega_x) + \log\pi(\bmX_{-i} \vert \bm\Omega_x)$.  The integral can then be shown to be
\[ \begin{split}
& \int \log \pi(\bmx_i , \bmw_i \vert \bm\Omega_x) \pi(\bmX \vert \bmw_i , \bm\Omega_x^{(t)}) \\
= & \int \big[ \log\pi(\bmw_i , \bmx_i \vert \bm\Omega_x) + \log\pi(\bmX_{-i} \vert \bm\Omega_x) \big]
\pi(\bmx_i \vert \bmw_i , \bm\Omega_x^{(t)}) \pi(\bmX_{-i} \vert \bm\Omega_x^{(t)}) d\bmx_i d\bmX_{-i} \\
= & \underbrace{\int\log\pi(\bmw_i , \bmx_i \vert \bm\Omega_x) \pi(\bmx_i \vert \bmw_i , \bm\Omega_x^{(t)}) d\bmx_i}_{=A} \underbrace{\int  \pi(\bmX_{-i} \vert \bm\Omega_x^{(t)}) d\bmX_{-i}}_{= 1} \\
& \qquad + \underbrace{\int \log\pi(\bmX_{-i} \vert \bm\Omega_x \pi(\bmX_{-i} \vert \bm\Omega_x^{(t)}) d\bmX_{-i}}_{= C_2} \underbrace{\int \pi(\bmx_i \vert \bmw_i , \bm\Omega_x^{(t)}) d\bmx_i}_{= 1}.
\end{split} \]
The value of $A$ is the expectation of $\log\pi(\bmw_i , \bmx_i \vert \bm\Omega_x)$ with respect to the full conditional of $X$ at iteration $t$, $\bmx_i \vert \bmw_i , \bm\Omega_x^{(t)} \sim N_d(\bm\Lambda^{-1 , (t)} \bm{\Omega_u} \bmw_i , \bm\Lambda^{-1 , (t)})$ where $\bm\Lambda^{(t)} = (\bm\Omega_x^{(t)} + \bm{\Omega_u})$.  This expectation is composed of two parts,
\[
\mathbb{E}_{\bmx_i \vert \bmw_i , \bm\Omega_x}(\bmx_i (\bm\Omega_x + \bm\Omega_u) \bmx_i) = 
\text{tr}((\bm\Omega_x + \bm\Omega_u)\Lambda^{-1 , (t)} ) + \bmw_i \bm\Omega_u \bm\Lambda^{(t)} (\bm\Omega_x + \bm\Omega_u) \bm\Lambda^{(t)} \bm\Omega_u \bmw_i
\]
and
\[
\mathbb{E}_{\bmx_i \vert \bmw_i , \bm\Omega_x}(\bmx_i^{T} \bm\Omega_u \bmw_i) = 
\bmw_i^{T} \bm\Omega_u \bm\Lambda^{(t)} \bm\Omega_u \bmw_i.
\]
Hence, $Z_{t,i}$ is 
\[
 -\frac{1}{2} \bmx_i^T (\bm\Omega_x + \bm\Omega_u) \bmx_i + \bmx_i^{T} \bm\Omega_u \bmw_i 
- \frac{1}{2}\bmw_i^T \bm\Omega_u \bm\Lambda^{(t)}(\bm\Omega_x + \bm\Omega_u) \bm\Lambda^{(t)} \bm\Omega_u \bmw_i +\bmw_i^{T} \bm\Omega_u \bm\Lambda^{(t)} \bm\Omega_u \bmw_i + C,
\]
where $C = C_1 + C_2$ is free of $\bmx_i$ and $\bmw_i$, which is the sum of scaled chi-squared distributions and thus is subexponential.
\end{proof}

\begin{condition}
For $t = 1 , \ldots , T$, $Q(\bm\Omega_x \vert \bm\Omega_x^{(t)})$ has a unique maximum at $\tilde{\bm\Omega}_x^{(t)}$; for any $\epsilon > 0$, $\sup_{\bm\Omega_x \backslash B_t(\epsilon)} Q(\bm\Omega_x \vert \bm\Omega_x^{(t)})$ exists, where $B_t(\epsilon) = \{\bm\Omega_x : \vert \bm\Omega_x - \tilde{\bm\Omega}_x^{(t)} \vert < \epsilon\}$.
\end{condition}
\begin{proof}
As noted in \cite{liang2018imputation}, this is satisfied if $\bm\Omega_x$ is restricted to a compact set.  So, since BAGUS is strictly convex when restricted by the condition that $\vert \vert \bm\Omega_x \vert \vert \leq B$, then the condition is satisfied.
\end{proof}

\begin{condition}
The penalty function is non-negative, ensures the existence of $\bm\Omega_x^{(t + 1)}$ for $t = 2 , \ldots , T$, and converges to 0 uniformly as $n \rightarrow \infty$.
\end{condition}
\begin{proof}
BAGUS is a non-negative penalty that exists for any $\bmX$, and, due to the adaptive nature of the penalty, converges to 0 as $n \rightarrow \infty$.  To see the penalty converges to 0, note Assumption \ref{ass:bagus_hyper}.1a implies 
\[
v_1 = \frac{1}{a_1 (1 - \epsilon_0) \sqrt{n\log(d)}} \rightarrow 0
\]
as $n \rightarrow \infty$, which, with a similar argument for $v_0$, results in the penalty being 0 as $n \rightarrow \infty$.
\end{proof}

\subsection{Computing BAGUS with the EM-Algorithm}

Here we review the optimization of the uncontaminated objective distribution.  The direct optimization of $L^{UC}$ in \eqref{eq:trueLogPosterior} is not easy due to the sum inside the logarithm.  \cite{gan2018bayesian} use the EM-algorithm to get around this issue by introducing the latent factors $r_{ij}$ from section \ref{section:BagusVariableSelection}.  This allows an E-step similar to the spike-and-slab Lasso and an M-step similar to the Graphical Lasso.  In this section, if not specified, $\bm\Sigma$ and $\bm\Omega$ refer to $\bmx$'s covariance and precision matrix, respectively. 

The optimization seeks to find the MAP of the posterior proportional to
\[
\vert \bm\Omega_x \vert^{\frac{1}{2}}\exp\left\{-\frac{1}{2}\bmX^T \bm\Omega_x \bmX \right\} 
\prod_{i < j} \pi(\omega_{ij} \vert r_{ij}) \pi(r_{ij} \vert \eta)
\prod_{i} \pi(\omega_{ii} \vert \tau) 
I(\bm\Omega_x \succ 0) I(\vert \vert \bm\Omega_x \vert \vert \leq B),
\]
where the latent indicator $r_{ij}$, as defined in Section \ref{section:BagusVariableSelection}, is incorporated into the off-diagonal elements in the prior specification. The E-step takes the conditional expectation of $r_{ij}$ in the proportional posterior.  Each $r_{ij}$ is conditionally Bernoulli with probability 
\[
p_{ij} = \frac{v_1}{v_0}\frac{1 - \eta}{\eta} \exp\left\{ \vert \omega_{ij}^{(t)} \vert \left(\frac{1}{v_1} - \frac{1}{v_0} \right) \right\},
\]
allowing for easy calculation of the conditional expectation.  Then, the desired $Q$ function to maximize in the M-step is given by 
\[
Q(\bm\Omega_x \vert \bm\Omega_x^{(t)}) = \mathbb{E}_{\bm{R} \vert \bm\Omega_x^{(t)}}\log\pi(\bm\Omega_x , \bmX \vert \bmW , \bm\Sigma_u),
\]
where the expectation is taken element wise for $\bm{R}$ by assumed independence of inclusion.

The M-step optimizes each column of $Q$ seperately with coordinate descent.  The last column's update is now explained, with the other columns following in the same pattern.  Partition the covariance matrix as 
\[
\bm\Sigma_x = \begin{bmatrix}
\bm\Sigma_{11} & \bm\sigma_{12} \\
\bm\sigma_{12}^T & \bm\sigma_{22}
\end{bmatrix},
\]
and let similar partitions be available for $\bm\Omega_x, \bm{P}, \bm{R}, \text{and } \bm{S}$.  Also note that
\[
\begin{bmatrix}
\bm\Sigma_{11} & \bm\sigma_{12} \\
\cdot & \bm\sigma_{22}
\end{bmatrix} = 
\begin{bmatrix}
\bm\Omega_x^{-1} + c^{-1}\bm\Omega_{11}^{-1} \bm\omega_{12} \bm\omega_{12}^{T} \bm\Omega_{11}^{-1} &
-c^{-1} \bm\Omega_{11}^{-1} \bm\omega_{12} \\
\cdot & c^{-1}
\end{bmatrix},
\]
where $c = \omega_{22} - \bm\omega_{12}^{T} \bm\Omega_{11}^{-1} \bm\omega_{12}.$  The update for the last column of $\bm\Sigma_x$ is the solution from setting subgradient of $Q$ with respect to $\left[\bm\sigma_{12} \ \bm\sigma_{22}\right]^{T}$ to 0.  The update for $\sigma_{22}$ can is easily attained from the setting the subgradient of $\omega_{22}$ to 0, 
\[
\omega_{22} = \frac{1}{\sigma_{22}} + \bm\omega_{12}^{T} \bm\Omega_{11}^{-1} \bm\omega_{12}.
\]
We note that each column update requires the matrix $\bm\Omega_{11}^{-1}$.  This can be computed as $\bm\Sigma_{11} - \bm\sigma_{12} \bm\sigma_{12}^{T} / \sigma_{22}$. 

\bibliography{references}
\bibliographystyle{plain}

\end{document}